\def\BibTeX{{\rm B\kern-.05em{\sc i\kern-.025em b}\kern-.08em
    T\kern-.1667em\lower.7ex\hbox{E}\kern-.125emX}}
\newtheorem{theorem}{Theorem}[section]
\newtheorem{observation}[theorem]{Observation}
\newcolumntype{C}{>{\Centering\arraybackslash}X} % centered "X" column
\newcolumntype{P}{>{\arraybackslash}X} % centered "X" column
\definecolor{mygray}{gray}{0.7}
\newcommand{\centered}[1]{\begin{tabular}{l} #1 \end{tabular}} %for center align table cell vertical
\begin{document}

%\title{Computation and Communication Contention Aware Scheduling Real-Time jobs with End-to-End Deadline Constraint in Edge Computing System*\\
\title{Optimal Fixed Priority Scheduling in Multi-Stage Multi-Resource Distributed Real-Time Systems\\
\thanks{$\textbf{Funding:}$ This work was supported by the MoE Tier-2 grant MOE-T2EP20221-0006.}
}

\author{\IEEEauthorblockN{Niraj Kumar, Chuanchao Gao, and Arvind Easwaran}
\IEEEauthorblockA{School of Computer Science and Engineering, Nanyang Technological University
Singapore\\
Email: niraj.kumar@ntu.edu.sg, gaoc0008@e.ntu.edu.sg, arvinde@ntu.edu.sg}}

\maketitle

\begin{abstract}

This work studies fixed priority (FP) scheduling of real-time jobs with end-to-end deadlines in a distributed system. Specifically, given a multi-stage pipeline with multiple heterogeneous resources of the same type at each stage, the problem is to assign priorities to a set of real-time jobs with different release times to access a resource at each stage of the pipeline subject to the end-to-end deadline constraints. Note, in such a system, jobs may compete with different sets of jobs at different stages of the pipeline depending on the job-to-resource mapping. To this end, following are the two major contributions of this work. We show that an OPA-compatible schedulability test based on the \emph{delay composition algebra} can be constructed, which we then use with an \emph{optimal priority assignment} algorithm to compute a \emph{priority ordering}. Further, we establish the versatility of \emph{pairwise priority assignment} in such a multi-stage multi-resource system, compared to a total priority ordering. In particular, we show that a pairwise priority assignment may be feasible even if a priority ordering does not exist. We propose an integer linear programming formulation and a scalable heuristic to compute a pairwise priority assignment. We also show through simulation experiments that the proposed approaches can be used for the holistic scheduling of real-time jobs in edge computing systems. 

\end{abstract}

\begin{IEEEkeywords}
Real-time Distributed Systems, Delay Composition Algebra, Optimal Priority Assignment, Edge Computing
\end{IEEEkeywords}

%%%%%%%%%%%%%%%%%%%%%%%%%%%%%%%%%%%%%%%%%%%%%% Section begins

\section{Introduction}\label{sec:intro}

%talk about fixed priority scheduling
\emph{Fixed priority} (FP) real-time scheduling has attracted significant attention in academia and broader acceptance in the industry due to its predictability and low runtime overhead~\cite{bril2017fixed, davis2016review}.
FP uniprocessor scheduling is NP-hard in general. However, due to the \emph{optimal priority assignment} (OPA) proposed by Audsley~\cite{audsley1991optimal}, the hardness is attributed to the schedulability test. More specifically, OPA is an optimal\footnote{An FP scheduling algorithm $\mathcal{P}$ is said to be \emph{optimal} with respect to a schedulability test $\mathcal{S}$ provided any schedulable taskset by some FP scheduling algorithm (under $\mathcal{S}$) is also schedulable by $\mathcal{P}$.} FP scheduling algorithm to compute a \emph{priority ordering}\footnote{A permutation or a total order of tasks that indicates priority.} of preemptive sporadic tasks with arbitrary release times on a uniprocessor. An optimal priority assignment with OPA only requires $O(n^2)$, for $n$ tasks, invocations to an OPA-compatible\footnote{Discussed, along with delay composition algebra, in Section \ref{sec:background}.} schedulability test. 

%Exact schedulability tests (such as those based on utilization~\cite{liu1973scheduling} or demand bound function~\cite{fisher2005fully}) have intractable complexity, and thus, sufficient tests are used in practice. 
%Mitra work OPA-incompatible --- it will need the total priority to compute the schedule graph
%skipping this...start with schedulability for distributed systems
Schedulability tests for distributed real-time systems are inherently more complex. Thus, a common approach to analyze the schedulability in distributed systems is to decompose the end-to-end deadline to a per-stage deadline and then independently analyze each stage as a uniprocessor system~\cite{palencia2003offset, hong2011meeting}.
%actual DCA
Alternatively, \emph{delay composition algebra} (DCA) is proposed to transform a multi-stage system into an equivalent single-stage system and then analyze schedulability by applying a uniprocessor schedulability test~\cite{jayachandran2008delay, jayachandran2008transforming}. As an intermediate step, a delay composition rule is also presented to compute an upper bound of end-to-end delay for each job in a multi-stage pipeline. In this work, we use the delay composition rule of DCA only; henceforth, the term DCA refers to the same.

%multi-stage
This work aims to compute an FP schedule for real-time jobs with release offsets and end-to-end deadlines in a distributed system. In particular, we focus on a multi-stage multi-resource (MSMR) system in which each stage corresponds to a resource type with multiple heterogeneous resources of the same type. Each real-time job accesses one of the resources at every stage in the pipeline and must exit the pipeline within the specified deadline. A simplified version of this problem, particularly multi-stage single-resource scheduling, reduces to flow shop scheduling of jobs with a deadline, which is known to be an NP-hard problem in general~\cite{bettati1992end}. Furthermore, the problem addressed in this work is more complex as we consider a generic model with heterogeneity among resources of the same type and jobs with arbitrary release times. Moreover, jobs are allowed to compete with distinct sets of jobs at different stages of the pipeline depending on the job-to-resource mapping. The major contributions of this work are summarized as follows.

\begin{itemize}
    \item To our knowledge, this work is the first to use OPA and DCA for optimal FP scheduling of real-time jobs with end-to-end deadlines in an MSMR system. Specifically, we show that an OPA-compatible schedulability test, based on DCA, can be constructed to compute an optimal priority ordering\footnote{It is worth mentioning that DCA is not originally intended to compute a priority assignment. Instead, given a priority assignment, DCA bounds the end-to-end delay of each job in a multi-stage system to obtain an equivalent single-stage system for schedulability analysis.} of real-time jobs in an MSMR system. The proposed algorithm can be applied for both preemptive and non-preemptive FP scheduling in an MSMR system for a given job-to-resource mapping.
    \item We establish that \emph{pairwise priority assignment}\footnote{Priorities are assigned to each pair of jobs that share at least one resource in the pipeline as opposed to priority ordering in which a total order is established among all jobs.} and priority ordering are not necessarily equivalent in MSMR systems. In particular, we show that a feasible pairwise priority assignment may exist for a set of real-time jobs that does not admit a priority ordering.
    %In particular, we show that there exists sets of jobs with a feasible pairwise priority assignment but no feasible priority ordering. 
    Further, we propose an ILP formulation and a \emph{deadline-monotonic \& repair} based heuristic to compute an optimal and a sub-optimal pairwise priority assignment, respectively.
    \item Through simulation experiments, we show that the proposed approaches can be applied for holistic scheduling of real-time jobs with end-to-end deadlines in edge computing systems. Notably, these approaches alleviate the challenges associated with computing and imposing a virtual deadline, as end-to-end deadlines are used directly for the priority assignment.%\aea{Some numbers from the experiments to highlight the significance of the results would be good here.}
    
\end{itemize}

%%%%%%%%%%%%%%%%%%%%%%%%%%%%%%%%%%%%%%%%%%%%%% Section begins

\section{System Model and Problem Statement} \label{sec:model}

%Let $N$ be the number of stages in a multi-stage multi-resource (MSMR) system. Each stage corresponds to a resource of a specific type (for instance, wireless networking resource). 
Consider a multi-stage multi-resource (MSMR) system with $N$ stages in which each stage corresponds to a resource of a specific type (for instance, edge servers). 
Moreover, multiple resources of heterogeneous capabilities are available at each stage. Let $\mathcal{J}$ be the given set of $n$ real-time jobs. Each job $J_i$ is specified with the following parameters (i) $A_i$: arrival time, (ii) $P_{i,j}$:  processing time\footnote{We use processing time in a generic sense, thus, for instance, for a networking resource it means transmission time.} of $J_i$ in stage $S_{j}$, (iii) $D_i$: end-to-end deadline, and (iv) $R_{i,j}$: the resource to which $J_i$ is mapped in stage $S_{j}$. 
%Let $\mathcal{M}_{i,j}$ be the set of all jobs mapped to $R_{i,j}$ (and thus, compete for this resource). Moreover, the set of all jobs that compete with $J_i$ is $\mathcal{M}_i = \cup_{j} \mathcal{M}_{i,j}$. 
Let $\mathcal{M}_{i,j}$ be the set of all jobs mapped with $J_i$ in stage $S_j$ (and thus, compete for $R_{i,j}$). Moreover, the set of all jobs that compete with $J_i$ is $\mathcal{M}_i = \cup_{j} \mathcal{M}_{i,j}$. 
%Note, $J_k \in \mathcal{M}_i$ implies $J_i \in \mathcal{M}_k$. 
%Further, we define a \emph{segment} corresponding to a job pair $\langle J_i, J_k \rangle$ as one of the longest sequences of consecutive stages for which $J_i$ and $J_k$ are mapped to the same resources. A pipeline may have multiple segments corresponding to each job pair. Let $m_{i,k}$ denote the number of segments corresponding to the job pair $\langle J_i, J_k \rangle$. For instance, $m_{i,b} = 0$ in Figure \ref{fig:proposed_delay}(a), $m_{i,b} = 1$ in Figures \ref{fig:proposed_delay}(b)-(d), and $m_{i,b} = 2$ in Figure \ref{fig:proposed_delay}(e).
% 

%to be computed
Let $\Delta_i$ denote the \emph{end-to-end delay} of $J_i$, which is the time it exits the pipeline since its arrival. Given an MSMR system and a set $\mathcal{J}$ of $n$ real-time jobs $J_i  \langle A_i, \{P_{i,j}\}, D_i, \{R_{i,j}\} \rangle$, then the problem of computing a
\begin{itemize}
    \item[P1.] {\fontfamily{lmtt}\selectfont Priority Ordering} is to devise an OPA-compatible schedulability test $\mathcal{S}^{DCA}$ based on DCA and then use it with OPA to assign a unique \emph{global} (i.e., valid across all stages) priority $\rho_i \in [1,n]$ to each job $J_i$ such that the deadline constraint is satisfied $\Delta_i \leq D_i$. 
    \item[P2.] {\fontfamily{lmtt}\selectfont Pairwise Priority Assignment} is to assign a \emph{global} fixed priority $J_i > J_k$ (indicating $J_i$ has a higher priority than $J_k$) or $J_k > J_i$ for each job pair $\langle J_i, J_k \rangle | J_k \in \mathcal{M}_i$ such that $\Delta_i \leq D_i$. 
\end{itemize}
\noindent
If a feasible priority ordering exists, solving P1 is sufficient (since OPA is optimal); however, otherwise, solving P2 is required (as a pairwise priority assignment may still be feasible, further details in Section \ref{sec:problem_p2}). 
%Note that solving P2 is required only if a feasible priority ordering exists; solving P1 is sufficient; however, for a certain set of jobs (for which a pairwise priority assignment is feasible but not a priority ordering, further details in Section \ref{sec:problem_p2}).
%

We define a \emph{segment} corresponding to a job pair $\langle J_i, J_k \rangle$ as one of the longest sequences of consecutive stages for which $J_i$ and $J_k$ are mapped to the same resources. A pipeline may have multiple segments corresponding to each job pair. Let $m_{i,k}$ denote the number of segments corresponding to the job pair $\langle J_i, J_k \rangle$. For instance, $m_{i,b} = 0$ in Figure \ref{fig:proposed_delay}(a), $m_{i,b} = 1$ in Figures \ref{fig:proposed_delay}(b)-(d), and $m_{i,b} = 2$ in Figure \ref{fig:proposed_delay}(e).

Let $\mathcal{H}_i$ (and $\mathcal{L}_i$) denote the set of higher (and lower) priority jobs of $J_i$. Also, define $\mathcal{Q}_i \ = \mathcal{H}_i \cup J_i$. We assume that any job $J_k$ which cannot interfere $J_i$, i.e.,  $[A_k, A_k + D_k]$ and $[A_i, A_i + D_i]$ do not overlap, is already excluded from $\mathcal{H}_i \text{ (and } \mathcal{L}_i)$. 
We assume a lower value (of $\rho_i$) indicates a higher priority. Both preemptive and non-preemptive FP scheduling are considered in this work. Firstly, essential background is discussed in Section \ref{sec:background}. Subsequently, problems P1 and P2 are addressed in Sections \ref{sec:problem_p1} and \ref{sec:problem_p2}, respectively. The commonly used notations are listed in the Table \ref{table:notations}.

\begin{figure}[t]
    \centering
    \includegraphics[width=0.97\linewidth]{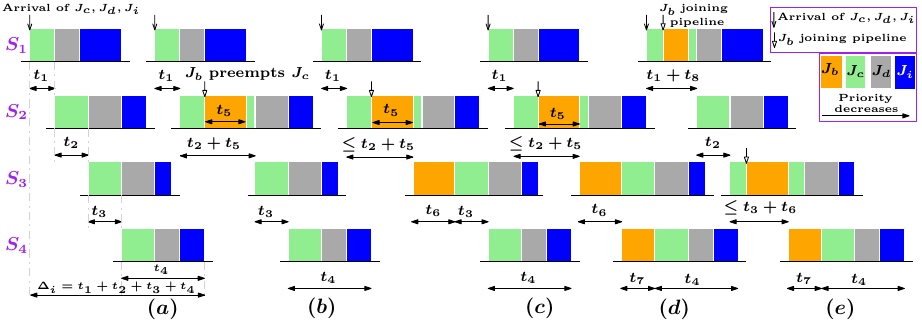}
    \caption{End-to-end delay of $J_i$ in a four-stage pipeline. (a) $\Delta_i = t_1 +  t_2 +  t_3 +  t_4$. (b) One job-additive term due to $J_b$ as it delays $J_i$ at one stage only, $\Delta_i = t_1 +  t_2 + t_5 + t_3 +  t_4$. (c) \& (d) Two job-additive terms if $J_b$ delays $J_i$ for two or more stages. (e) $m_{i,b}=2$ and up to two job-additive terms of $J_b$ corresponding to each segment.}
    \label{fig:proposed_delay}
\end{figure}

%%%%%%%%%%%%%%%%%%%%%%%%%%%%%%%%%%%%%%%%%%%%%% Section begins

\section{Background} \label{sec:background}

\subsection{Delay Composition Algebra (DCA)} \label{sec:dca_intro}

%\textit{Multi-Stage Single-Resource Distributed System:} 
The end-to-end delay bound of a job (henceforth, of $J_i$) in a multi-stage single-resource pipeline\footnote{All jobs compete with each other for every resource in every stage.} can be computed using DCA as explained in the following (reproduced from~\cite{jayachandran2008delay}). 

Let $t_{i,x}$ denote the $x^{th}$ maximum stage-processing time of $J_i$, for instance, $t_{i,1} = \max_{j} \{P_{i,j} \}$. Then, in preemptive scheduling, as illustrated in Figure~\ref{fig:proposed_delay}(a), $\Delta_i$ can be bounded by (i) \emph{stage-additive component:} corresponding to each stage, except the last, to incorporate the overlapped execution of jobs of $\mathcal{Q}_i$, and (ii) \emph{job-additive component}: corresponding to each higher priority job, as it may delay $J_i$ while leaving the pipeline.
Let $\mathcal{H}_i^{a} \subseteq \mathcal{H}_i$ be the set of higher priority jobs that join the pipeline after $J_i$ i.e. a job $J_k \in \mathcal{H}_i^{a}$ provided $J_k \in \mathcal{H}_i \ \text{and} \ A_i~<~A_k$. Then, $J_k \in  \mathcal{H}_i^{a}$ will also contribute a job-additive component while joining the pipeline. Referring to Figure \ref{fig:proposed_delay}(c) for an illustration, $J_b$ contributes two terms, one while joining the execution pipeline of $J_i$ at $S_2$ and another while leaving at $S_3$. 
%can also delay $J_i$ by its stage-processing time 
Thus, combining these interferences, the end-to-end delay of $J_i$ in preemptive scheduling is (Eq.~1,~\cite{jayachandran2008delay})
\begin{equation}\label{eq:dca_orig_pr}
    \Delta_i \leq \sum\limits_{ J_k \in \mathcal{Q}_i} t_{k,1} + \sum\limits_{ J_k \in \mathcal{H}_i^{a}} t_{k,2} +  \sum\limits_{j = 1}^{N - 1} \max_{ J_k \in \mathcal{Q}_i} \{P_{k,{j}}\}
\end{equation}
In non-preemptive scheduling, one lower priority job may delay $J_i$ in each stage; however, jobs $J_k \in \mathcal{H}_i^{a}$ cannot preempt $J_i$. Thus, in this case, the delay of $J_i$ is (Eq.~2,~\cite{jayachandran2008delay})
\begin{equation}\label{eq:dca_orig_npr}
    \Delta_i \leq \sum\limits_{ J_k \in \mathcal{Q}_i}  t_{k,1} +  \sum\limits_{j = 1}^{N - 1} \max_{ J_k \in \mathcal{Q}_i}\{P_{k,{j}}\} + \sum\limits_{j = 1}^{N} \max_{ J_k \in \mathcal{L}_i}\{P_{k,{j}}\}  
\end{equation}

%\textit{Multi-Stage Multi-Resource Distributed System:} 
The extended version of DCA~\cite{jayachandran2008transforming} can be used to analyze the delay in an MSMR system with multiple segments for each job pair. We define $\widetilde{p}_{k,{j}} = P_{k,{j}}$ if $J_i$ and $J_k$ share the $j^{th}$ stage, otherwise $\widetilde{p}_{k,{j}} = 0$.
Correspondingly, we also define $\widetilde{t}_{k,x}$, for instance, $\widetilde{t}_{k,1} =  \max_{j} \{\widetilde{p}_{k,{j}} \}$. 
Analogous to Eq. \ref{eq:dca_orig_pr}, a higher priority job $J_k$ contributes two job-additive terms corresponding to each of the $m_{i,k}$ segments in a preemptive scheduling, and thus, the delay is (Eq.~1,~\cite{jayachandran2008transforming})
\begin{equation}\label{eq:dca_orig_pr_multiresource}
    \Delta_i \leq \sum\limits_{ J_k \in \mathcal{Q}_i} 2 m_{i,k} \widetilde{t}_{k,1} + \sum\limits_{j = 1}^{N - 1} \max_{ J_k \in \mathcal{Q}_i}\{\widetilde{p}_{k,{j}}\}
\end{equation}
Likewise, Eq.~\ref{eq:dca_orig_npr} is rewritten to (Eq.~5,~\cite{jayachandran2008transforming}) 
\begin{equation}\label{eq:dca_orig_npr_multiresource}
    \Delta_i \leq \sum\limits_{ J_k \in \mathcal{Q}_i} m_{i,k} \widetilde{t}_{k,1} +  \sum\limits_{j = 1}^{N - 1} \max_{ J_k \in \mathcal{Q}_i}\{\widetilde{p}_{k,{j}}\} + \sum\limits_{j = 1}^{N} \max_{ J_k \in \mathcal{L}_i}\{\widetilde{p}_{k,{j}}\}  
\end{equation}

\subsection{Optimal Priority Assignment (OPA)} \label{sec:opa_intro}

OPA computes a priority ordering of $n$ tasks by assigning priorities from $n$ (lowest) to 1 (highest) in that order. The current priority is assigned to one of the feasible tasks ($\Delta_i \leq D_i$), which is yet to be assigned a priority. If no task is feasible for a priority, the taskset is declared unschedulable. The process is repeated until all tasks are assigned a priority or no feasible task is found for a priority.

A schedulability test is used with OPA to determine the feasibility of a task for a priority. However, the schedulability test must be OPA-compatible~\cite{davis2009priority, davis2016review}. Specifically, a schedulability test is OPA-compatible, provided it satisfies the following three necessary and sufficient conditions. The \textbf{first (second) condition} requires that the schedulability of a task may depend on the set of higher (lower) priority tasks but not on their relative priority. Whereas the \textbf{third condition} states that if the priorities of two tasks with adjacent priorities are swapped, then the task being assigned the lower (higher) priority remains unschedulable (schedulable) if it was unschedulable (schedulable) earlier.
In the following, we use OPA to assign priority to the real-time jobs.

%table
\begin{table}
\small
\centering
\setlength{\extrarowheight}{.15em}
\caption{Notations}
\label{table:notations}
\begin{tabularx}{\linewidth}{cP}
\hline 
Symbol & Meaning  \tabularnewline \hline
$N$ & No. of stages in the pipeline  \tabularnewline  \arrayrulecolor{mygray}\hline
$S_{j}$ & $j^{th}$ stage of the pipeline   \tabularnewline \arrayrulecolor{mygray}\hline
%$n$ & No. of jobs  \tabularnewline  \arrayrulecolor{mygray}\hline
$J_i$ & $i^{th}$ job in the set of jobs $\mathcal{J}$ \tabularnewline  \arrayrulecolor{mygray}\hline
$D_i   / \Delta_i$ & End-to-end deadline/delay of $J_i$ \tabularnewline  \arrayrulecolor{mygray}\hline 
%$\Delta_i$ & End-to-end delay of $J_i$  \tabularnewline  \arrayrulecolor{mygray}\hline
$P_{i,j}$ & Processing time of $J_i$ in stage $S_{j}$  \tabularnewline  \arrayrulecolor{mygray}\hline
$\widetilde{p}_{k,{j}}$ & $P_{k,{j}}$ if $J_i$ and $J_k$ share $S_{j}$, else 0   \tabularnewline \arrayrulecolor{mygray}\hline
$\mathcal{H}_i  / \mathcal{L}_i$ & Set of higher/lower priority jobs of $J_i$  \tabularnewline  \arrayrulecolor{mygray}\hline
$\mathcal{Q}_i$ & Set of higher priority jobs of $J_i$ and itself ($\mathcal{H}_i \cup J_i$)    \tabularnewline  \arrayrulecolor{mygray}\hline
$t_{i,x}$ & $x^{th}$ maximum stage-processing time of $J_i$   \tabularnewline  \arrayrulecolor{mygray}\hline
\centered{$\widetilde{t}_{k,x}$} & $t_{k,x}$ but computed only  for the stages shared with $J_i$ \tabularnewline  \arrayrulecolor{mygray}\hline
$m_{i,k}$ & No. of segments corresponding to job pair $\langle J_i, J_k \rangle$   \tabularnewline  \arrayrulecolor{mygray}\hline
$w_{i,k}$ & No. of job-additive components of $J_k$ that delays $J_i$  \tabularnewline  \arrayrulecolor{mygray}\hline
%$\mathcal{M}_{i,j}$ & Set of all jobs that compete with $J_i$ at $S_{j}$  \tabularnewline  \arrayrulecolor{mygray}\hline
%$\mathcal{M}_i$ & Set of all jobs that compete with $J_i$ ($\cup_{i} \mathcal{M}_{i,j}$)  \tabularnewline  \arrayrulecolor{black}\hline
$\mathcal{M}_{i,j} / \mathcal{M}_i$ & Set of jobs competing $J_i$ in $S_{j}$/in pipeline \tabularnewline  \arrayrulecolor{black}\hline

\end{tabularx}
\end{table}

\section{Optimal Fixed Priority Scheduling in Distributed Real-time Systems} \label{sec:problem_p1}

\subsection{$\mathcal{S}^{DCA}-$ A schedulability test based on DCA} \label{sec:sdca}

We define $\mathcal{S}^{DCA}(J_i, \mathcal{H}_i, \mathcal{L}_i)$ to be a schedulability test based on DCA to determine the feasibility of job $J_i$ given $\mathcal{H}_i$ and $\mathcal{L}_i$.  %Specifically, $\mathcal{S}^{DCA}$ determines $\Delta_i$ using Eq. \ref{eq:dca_orig_npr_extended} (or Eq. \ref{eq:dca_orig_pr_multiresource_reduced_pessimism}) for non-preemptive (or for preemptive) scheduling, and $J_i$ is deemed feasible under this priority provided $\Delta_i \leq D_i$. 
However, firstly, we refine the DCA analysis to ensure OPA-compatibility and to reduce the pessimism.

\begin{observation} \label{th:pr_opa_compatible}
    $\mathcal{S}^{DCA}$ based on Eq.~\ref{eq:dca_orig_pr} is OPA-compatible.
\end{observation}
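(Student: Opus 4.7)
The plan is to verify the three necessary and sufficient OPA-compatibility conditions recalled in Section~\ref{sec:opa_intro} directly against the delay expression in Eq.~\ref{eq:dca_orig_pr}. The key structural observation to exploit is that the right-hand side of Eq.~\ref{eq:dca_orig_pr} is expressed purely in terms of (a) a sum over $\mathcal{Q}_i = \mathcal{H}_i \cup \{J_i\}$, (b) a sum over the subset $\mathcal{H}_i^a \subseteq \mathcal{H}_i$ determined solely by arrival times via the predicate $A_k > A_i$, and (c) a per-stage maximum over $\mathcal{Q}_i$. In particular, the expression never references $\mathcal{L}_i$ or any internal ordering of the jobs inside $\mathcal{H}_i$; this essentially hands us the first two conditions and reduces the third to a monotonicity argument.

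First I would establish Condition~1 by inspection: every summation and maximum in Eq.~\ref{eq:dca_orig_pr} is a function of the unordered sets $\mathcal{Q}_i$ and $\mathcal{H}_i^a$, both of which are invariant under permutations of priorities within $\mathcal{H}_i$. Condition~2 is then immediate because $\mathcal{L}_i$ does not appear in the bound at all, so the test's verdict cannot depend on the lower-priority jobs, let alone their order.

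For Condition~3, consider two jobs $J_i$ and $J_k$ occupying adjacent priority levels and swap their priorities so that $J_i$ moves down and $J_k$ moves up. For $J_i$, the new higher-priority set is $\mathcal{H}_i \cup \{J_k\}$, and I would verify term-by-term that the bound in Eq.~\ref{eq:dca_orig_pr} is monotone non-decreasing under this enlargement: the first sum acquires a nonnegative term $t_{k,1}$, the second sum acquires $t_{k,2}$ exactly when $A_k > A_i$ (and nothing otherwise), and the per-stage maximum over $\mathcal{Q}_i$ can only grow when $\mathcal{Q}_i$ grows. Hence an infeasible $J_i$ remains infeasible. The symmetric argument, applied to $J_k$ when $J_i$ is removed from $\mathcal{H}_k$, shows the bound on $\Delta_k$ is monotone non-increasing, so a schedulable $J_k$ stays schedulable.

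The main obstacle I anticipate is handling the $\mathcal{H}_i^a$ term cleanly, because it is defined through an arrival-time predicate rather than priorities and one must be careful that enlarging $\mathcal{H}_i$ truly enlarges $\mathcal{H}_i^a$. This reduces to the elementary fact that $\mathcal{H}_i^a = \mathcal{H}_i \cap \{J_k : A_k > A_i\}$, so intersection with a fixed set preserves monotonicity; once this is spelled out, Condition~3 follows. The whole proof is therefore short, but the argument must be stated carefully enough that the reader sees that none of the three ingredients of Eq.~\ref{eq:dca_orig_pr} can break monotonicity under an adjacent swap.
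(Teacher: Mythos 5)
Your proposal is correct and follows essentially the same route as the paper's proof: conditions 1 and 2 by inspection (the bound in Eq.~\ref{eq:dca_orig_pr} depends only on the unordered set $\mathcal{H}_i$ and never on $\mathcal{L}_i$), and condition 3 by observing that every term is monotone in $\mathcal{Q}_i$ (and hence in $\mathcal{H}_i^a$) under an adjacent swap. The paper additionally spells out the trivial case where the swapped pair does not involve $J_i$, but that is already subsumed by your conditions 1 and 2.
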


\begin{proof}
    The first two conditions of OPA-compatibility are satisfied trivially as only the set of higher priority jobs is required (and not the relative priority among them) to compute $\Delta_i$.  
    Assume $J_c$ and $J_d$ are two jobs with priorities $\rho_i-1$ and $\rho_i+1$, respectively. Then, the following three cases must be investigated to verify the third condition.
    {\fontfamily{lmtt}\selectfont Case~1.} Priorities of $J_i$ and $J_c$ are swapped. $J_i$ gets a higher priority. As a result, $\mathcal{Q}_i \gets \mathcal{Q}_i \backslash J_c$, and $\mathcal{H}_i^{a} \gets \mathcal{H}_i^{a} \backslash J_c$ if $J_c \in \mathcal{H}_i^{a}$, and thus, $\Delta_i$ would decrease.
    {\fontfamily{lmtt}\selectfont Case~2.} Priorities of $J_i$ and $J_d$ are swapped. $J_i$ gets a lower priority. Thus, $\mathcal{Q}_i \gets \mathcal{Q}_i \cup J_d$ and $\mathcal{H}_i^{a} \gets \mathcal{H}_i^{a} \cup J_d$ if $A_i < A_d$, and hence, $\Delta_i$ would increase.
    {\fontfamily{lmtt}\selectfont Case~3.} Priorities of two jobs with adjacent priorities are swapped such that both are either higher or lower priority jobs than $J_i$. Both $\mathcal{Q}_i$ and $\mathcal{H}_i^{a}$ remains unchanged in this case, and so does $\Delta_i$.
    Thus, the third condition is also satisfied. 
\end{proof}

\begin{observation} \label{th:npr_opa_incompatible}
    $\mathcal{S}^{DCA}$ based on Eq.~\ref{eq:dca_orig_npr} is OPA-incompatible.
\end{observation}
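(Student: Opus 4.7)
The plan is to demonstrate that $\mathcal{S}^{DCA}$ based on Eq.~\ref{eq:dca_orig_npr} violates at least one of the three OPA-compatibility conditions. Conditions 1 and 2 are still satisfied by the same reasoning as in Observation~\ref{th:pr_opa_compatible}: the right-hand side of Eq.~\ref{eq:dca_orig_npr} depends only on the sets $\mathcal{Q}_i$ and $\mathcal{L}_i$ and not on the relative ordering within them. Hence the entire argument must rest on exhibiting a failure of the third condition, namely that lowering $J_i$'s priority via an adjacent swap need not be monotone in the delay bound.

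The key structural observation I would isolate first is the following asymmetry between the three summands of Eq.~\ref{eq:dca_orig_npr}. When $J_i$ is swapped with an adjacent lower-priority job $J_d$, the set $\mathcal{L}_i$ loses $J_d$ (potentially decreasing the third summand, which runs over \emph{all} $N$ stages), while $\mathcal{Q}_i$ gains $J_d$ (adding only a single term $t_{d,1}$ to the first summand, plus a possible contribution to the $N-1$ stage-maxima of the second summand). Because the removal from $\mathcal{L}_i$ can drop the third summand by as much as $\sum_{j=1}^{N} P_{d,j}$ while the gain in the first summand is capped at $t_{d,1} = \max_j P_{d,j}$, the bound on $\Delta_i$ can strictly \emph{decrease} after the swap whenever $J_d$ is sufficiently ``spread out'' across stages. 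This monotonicity failure is exactly what is forbidden by OPA-compatibility's third condition.

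Having isolated the mechanism, I would turn it into a minimal counterexample. A two-stage pipeline with a single-resource per stage suffices: take jobs $J_i$ with $P_{i,1}$ large and $P_{i,2}$ small, and $J_d$ (adjacent lower priority) with both $P_{d,1}$ and $P_{d,2}$ large. Before the swap one computes the bound from Eq.~\ref{eq:dca_orig_npr} using $\mathcal{Q}_i = \{J_i\}$, $\mathcal{L}_i = \{J_d\}$; after the swap one recomputes with $\mathcal{Q}_i = \{J_d, J_i\}$, $\mathcal{L}_i = \emptyset$. Choosing the numbers so that $P_{d,1} + P_{d,2}$ dominates $t_{d,1}$ plus the incremental stage-max term will make the post-swap bound strictly smaller than the pre-swap bound. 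Picking $D_i$ between these two values then produces the concrete violation: $J_i$ is flagged unschedulable at the higher priority but schedulable at the lower priority, contradicting the third condition.

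The main obstacle is purely expository rather than technical: one must verify the stage-max term $\max_{J_k \in \mathcal{Q}_i}\{P_{k,j}\}$ in the middle summand is handled correctly when $J_d$ enters $\mathcal{Q}_i$, because an injudicious choice of parameters could cause that term to absorb the entire ``gain'' from removing $J_d$ from $\mathcal{L}_i$. Making the multi-stage contribution $\sum_{j=1}^{N} P_{d,j}$ of $J_d$ genuinely exceed both $t_{d,1}$ and its marginal contribution to $\max_{J_k \in \mathcal{Q}_i}\{P_{k,j}\}$ is what forces the counterexample to require $N \geq 2$, and also what reveals why the preemptive bound of Eq.~\ref{eq:dca_orig_pr} escapes this pathology (it lacks the $\mathcal{L}_i$-dependent third summand altogether).
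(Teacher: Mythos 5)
Your proposal is correct and takes essentially the same route as the paper: both prove OPA-incompatibility by exhibiting a concrete violation of the third condition, driven by the fact that demoting $J_i$ moves a job $J_d$ from $\mathcal{L}_i$ (where it contributes $\sum_{j=1}^{N} P_{d,j}$ to the blocking term) into $\mathcal{Q}_i$ (where it contributes only $t_{d,1}$ plus possible stage-maxima), so the bound can strictly decrease. The paper instead uses a fixed $3$-stage, $4$-job instance ({\fontfamily{lmtt}\selectfont Example~1}, where swapping $J_2$ and $J_3$ drops $\Delta_2$ from $92$ to $87$), whereas your two-stage, two-job construction is a valid minimal instance of the same mechanism (with $P_{i,1}\geq P_{d,1}$ the bound drops by exactly $\min\{P_{d,1},P_{d,2}\}>0$).
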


For this observation, consider the following example, for which the third condition of OPA-compatibility is violated.

%example
{\fontfamily{lmtt}\selectfont Example~1.} Consider a distributed system of three stages and four jobs: $J_1 \ \langle 5, 7, 15 \rangle$\footnote{Values inside $\langle \ldots  \rangle$ indicate respective stage-processing times.}, $J_2 \ \langle 7, 9, 17 \rangle$, $J_3 \ \langle 6, 8, 30 \rangle$, and $J_4 \ \langle 2, 4, 3 \rangle$. Assuming $J_1, J_2, J_3, J_4$ to be the priority ordering from highest to lowest, $\Delta_2 = 92$, however if the priorities of $J_2$ and $J_3$ are swapped then $\Delta_2$ reduces to 87. 

Observations \ref{th:pr_opa_compatible} and \ref{th:npr_opa_incompatible} hold for Eqs. \ref{eq:dca_orig_pr_multiresource} and \ref{eq:dca_orig_npr_multiresource}, respectively.
%extended
%Moreover, since $\mathcal{L}_i \subseteq \mathcal{J} \backslash J_i$, we get the following end-to-end delay for non-preemptive scheduling (by using the set $\mathcal{J} \backslash J_i$ rather than $\mathcal{L}_i$ in the 3$^{rd}$ term of  Eq.~\ref{eq:dca_orig_npr_multiresource}) 
Since $\mathcal{L}_i \subseteq \mathcal{J} \backslash J_i$, we get the following, more pessimistic but OPA-compatible, end-to-end delay for non-preemptive scheduling (by using the set $\mathcal{J} \backslash J_i$ rather than $\mathcal{L}_i$ in the 3$^{rd}$ term of  Eq.~\ref{eq:dca_orig_npr_multiresource}) 
\begin{equation}\label{eq:dca_orig_npr_extended}
    \Delta_i \leq \sum\limits_{ J_k \in \mathcal{Q}_i} m_{i,k} \widetilde{t}_{k,1} +  \sum\limits_{j = 1}^{N - 1} \max_{ J_k \in \mathcal{Q}_i}\{\widetilde{p}_{k,{j}}\} + \sum\limits_{j = 1}^{N} \max_{ J_k \in \mathcal{J} \backslash J_i}\{\widetilde{p}_{k,{j}}\}  
\end{equation}
%The correctness of Eq.~\ref{eq:dca_orig_npr_extended} is guaranteed since $\mathcal{L}_i \subseteq \mathcal{J} \backslash J_i$. Note that $\mathcal{S}^{DCA}$ based on Eq.~\ref{eq:dca_orig_npr_extended} is OPA-compatible.

%pessimism
We also note pessimism in Eq.~\ref{eq:dca_orig_pr_multiresource} as one job-additive term is sufficient for a segment of one stage as the job arrives and leaves at the same stage \big(also illustrated in Figures \ref{fig:proposed_delay}(b) and \ref{fig:proposed_delay}(e)\big). 
Let $J_i$ and $J_k$ share $u_{i,k}$ segments of exactly one stage each and $v_{i,k}$ segments of two or more stages. Then, the maximum number of stage-processing terms of $J_k \in \mathcal{H}_i$ that contributes to $\Delta_i$ is $w_{i,k} = u_{i,k} + 2 v_{i,k}$. Assuming $w_{i,i} = 1$, Eq.~\ref{eq:dca_orig_pr_multiresource} can be rewritten as 
\begin{equation} \label{eq:dca_orig_pr_multiresource_reduced_pessimism}
    \Delta_i \leq \sum\limits_{ J_k \in \mathcal{Q}_i} \sum\limits_{x = 1}^{w_{i,k}} \widetilde{t}_{k,x} + \sum\limits_{j = 1}^{N - 1} \max_{ J_k \in \mathcal{Q}_i}\{\widetilde{p}_{k,{j}}\}
\end{equation}

%$w_{j,a}=3$, 

%We define $\mathcal{S}^{DCA}(J_i, \mathcal{H}_i, \mathcal{L}_i)$ to be a schedulability test based on DCA to determine the feasibility of job $J_i$ given $\mathcal{H}_i$ and $\mathcal{L}_i$. Specifically,
The delay bound reported by (refined) DCA is used in $\mathcal{S}^{DCA}$ as the value of $\Delta_i$ to determine feasibility. That is to say, $\mathcal{S}^{DCA}$ determines $\Delta_i$ using Eq. \ref{eq:dca_orig_npr_extended} (or Eq. \ref{eq:dca_orig_pr_multiresource_reduced_pessimism}) for non-preemptive (or for preemptive) scheduling, and $J_i$ is deemed feasible under this priority provided $\Delta_i \leq D_i$.

\subsection{Optimal Priority Assignment based on $\mathcal{S}^{DCA}$ (OPDCA)} \label{sec:opdca}

OPDCA, Algorithm \ref{algo:dca_and_opa}, assigns optimal priority (based on OPA) using $\mathcal{S}^{DCA}$. If no task, which is yet to be assigned a priority, is feasible for the current priority level $prior$, in Steps~\ref{step:opdca_inner_for}-\ref{step:opdca_inner_for_ends}, $\mathcal{J}$ is declared infeasible in Step~\ref{stp:opdca_infeasible}, otherwise $prior$ is assigned to a feasible task in each iteration of the outer \emph{for} loop (Steps ~\ref{step:opdca_outer_for}-\ref{stp:opdca_infeasible}).  
In the worst case, for each iteration of the outer \emph{for} loop, the inner \emph{for} loop is executed for all jobs that are yet to be assigned a priority. Thus, $\mathcal{S}^{DCA}$ is called $O(n^2)$ times and the complexity of each call is $O(nN)$. As a result, the overall complexity of Algorithm~\ref{algo:dca_and_opa} is $O(n^3 N)$. 

\begin{algorithm}[t]
\caption{OPDCA}\label{algo:dca_and_opa}
    Set $\mathcal{U} \gets \mathcal{J}$, and $\ell \gets \phi$ \;
    \For{each $prior=n$ \KwTo $1$ \label{step:opdca_outer_for}}{
        $assigned \gets FALSE$ \;
        \For{each $J_i \in \mathcal{U}$ \label{step:opdca_inner_for}}{
            $\mathcal{H}_i \gets \mathcal{U} \backslash J_i$, $\mathcal{L}_i \gets \ell$ \;
            \If{$\mathcal{S}^{DCA}(J_i,\mathcal{H}_i, \mathcal{L}_i)$ is true \label{step:opdca_sdca_call} }{
                $\rho_i \gets prior$; $\mathcal{U} \gets \mathcal{U} \backslash J_i$; $\ell \gets \ell \cup J_i $\;  $assigned \gets TRUE$; {\fontfamily{lmtt}\selectfont Break} \label{step:opdca_inner_for_ends} \; %Go to Step \ref{step:opdca_outer_for}
            }
        }
        \If{$assigned \text{ is } FALSE$}{
            Declare $\mathcal{J}$ {\fontfamily{lmtt}\selectfont infeasible}  \label{stp:opdca_infeasible}\;
        }
    }
\end{algorithm}

\begin{observation}
    OPDCA is an optimal FP scheduling algorithm to compute a priority ordering with respect to $\mathcal{S}^{DCA}$.
\end{observation}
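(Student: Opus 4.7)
The plan is to reduce the claim to a direct application of Audsley's OPA optimality theorem, which already guarantees that OPA computes a feasible priority ordering whenever one exists, provided the underlying schedulability test is OPA-compatible. Since OPDCA is literally OPA invoked with $\mathcal{S}^{DCA}$ (Steps~\ref{step:opdca_outer_for}--\ref{stp:opdca_infeasible}), the only substantive work is to verify OPA-compatibility of $\mathcal{S}^{DCA}$ in the two forms it is actually used, namely Eq.~\ref{eq:dca_orig_pr_multiresource_reduced_pessimism} (preemptive) and Eq.~\ref{eq:dca_orig_npr_extended} (non-preemptive).

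First I would handle the preemptive case by reusing Observation~\ref{th:pr_opa_compatible}. The bound in Eq.~\ref{eq:dca_orig_pr_multiresource_reduced_pessimism} depends only on $\mathcal{Q}_i$, on $\widetilde{p}_{k,j}$ (a purely geometric quantity determined by the job-to-resource mapping and independent of the priority assignment), and on the counts $w_{i,k}$ (which are also functions solely of the mapping). In particular, $\Delta_i$ does not depend on the relative priorities inside $\mathcal{H}_i$ nor on $\mathcal{L}_i$ at all, so OPA-compatibility conditions 1 and 2 hold trivially. For condition 3, I would replicate the three-case analysis of Observation~\ref{th:pr_opa_compatible}: swapping $J_i$ with an adjacent higher-priority $J_c$ strictly shrinks $\mathcal{Q}_i$ (and thus $\Delta_i$ cannot increase), swapping with an adjacent lower-priority $J_d$ strictly enlarges $\mathcal{Q}_i$ (so $\Delta_i$ cannot decrease), and swapping two jobs both on the same side of $J_i$ leaves $\mathcal{Q}_i$ unchanged. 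Monotonicity of each term in Eq.~\ref{eq:dca_orig_pr_multiresource_reduced_pessimism} in the set $\mathcal{Q}_i$ delivers the required inequality.

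For the non-preemptive case I would use Eq.~\ref{eq:dca_orig_npr_extended} rather than Eq.~\ref{eq:dca_orig_npr_multiresource}. The key observation is that the third term now uses $\mathcal{J}\setminus J_i$ instead of $\mathcal{L}_i$, so the bound has no dependence on the lower-priority set whatsoever; conditions 1 and 2 are immediate. Condition 3 then reduces to exactly the same three-case argument as above, since the remaining dependence is only on $\mathcal{Q}_i$ through the first two terms, and both are monotone non-decreasing in $\mathcal{Q}_i$. This is precisely the reason Eq.~\ref{eq:dca_orig_npr_extended} was introduced as a more pessimistic but OPA-compatible replacement for Eq.~\ref{eq:dca_orig_npr_multiresource}.

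With OPA-compatibility established in both variants, the result follows from Audsley's theorem: for any OPA-compatible $\mathcal{S}^{DCA}$, OPA returns a priority ordering that satisfies $\mathcal{S}^{DCA}$ whenever at least one such ordering exists, and declares infeasibility otherwise; hence OPDCA is optimal with respect to $\mathcal{S}^{DCA}$. I do not expect a technical obstacle — the entire argument is a bookkeeping exercise on top of Observation~\ref{th:pr_opa_compatible} — but the one place to be careful is ensuring that the refinement leading to $w_{i,k}$ and the substitution $\mathcal{J}\setminus J_i$ in the non-preemptive bound have not reintroduced any hidden dependence on the relative priorities within $\mathcal{H}_i$ or on $\mathcal{L}_i$; a short inspection of each summand in Eqs.~\ref{eq:dca_orig_pr_multiresource_reduced_pessimism} and~\ref{eq:dca_orig_npr_extended} suffices.
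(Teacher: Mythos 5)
Your proposal is correct and follows essentially the same route as the paper: the paper likewise reduces the claim to the optimality of Audsley's OPA (noting that in any feasible priority ordering at least one job is feasible at each priority level), relying on the OPA-compatibility of $\mathcal{S}^{DCA}$ for Eqs.~\ref{eq:dca_orig_pr_multiresource_reduced_pessimism} and~\ref{eq:dca_orig_npr_extended} already established in Section~\ref{sec:sdca}. The additional detail you supply in re-verifying compatibility of the refined bounds is consistent with, and subsumed by, Observation~\ref{th:pr_opa_compatible} and the surrounding discussion.
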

The proof of this observation follows from the optimality proof of OPA~\cite{audsley2001priority, davis2009priority}. If there exists any FP scheduling algorithm $\mathcal{P}$ that can compute a priority ordering with respect to $\mathcal{S}^{DCA}$, OPDCA can also compute one such priority ordering since in any feasible priority ordering, at least one job is feasible for each priority.
Moreover, OPA is also optimal for non-preemptive scheduling~\cite{davis2016review}; the same holds for OPDCA since $\mathcal{S}^{DCA}$ (based on Eq.~\ref{eq:dca_orig_npr_extended}) is OPA-compatible.

%%%%%%%%%%%%%%%%%%%%%%%%%%%%%%%%%%%%%%%%%%%%%% Section begins

\section{Pairwise Priority Assignment}\label{sec:problem_p2}

A total priority ordering is suitable for single-stage or multi-stage single-resource systems. However, such a total priority ordering in MSMR systems mandates relative priority assignment even to jobs that may not share a single resource in the pipeline. For instance, the priority of $J_1$ with respect to $J_4$ (and vice-versa) is inconsequential for the given job-to-resource mapping shown in Figure~\ref{fig:pairwise_prior2}(a). 

\begin{observation}\label{obs:pairwise}
    A pairwise priority assignment can exist for a set of jobs that does not admit a total priority ordering.
\end{observation}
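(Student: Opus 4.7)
The plan is to prove Observation \ref{obs:pairwise} by explicit construction: exhibit a small MSMR instance that admits a feasible pairwise priority assignment but no feasible priority ordering. First, I would build a three-job, three-stage example in which each pair of jobs competes at exactly one, distinct stage---so that, for example, $J_1$ and $J_2$ share a resource only at $S_1$, $J_2$ and $J_3$ only at $S_2$, and $J_1$ and $J_3$ only at $S_3$. This ``rotational'' job-to-resource mapping is the key structural device: by the definition of $\widetilde{p}_{k,j}$, a higher-priority job $J_k \in \mathcal{H}_i$ contributes to $\Delta_i$ in Eq.~\ref{eq:dca_orig_pr_multiresource_reduced_pessimism} only through the unique stage that $J_i$ and $J_k$ share, which decouples the three pairwise interferences from one another.

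Second, I would pick uniform unit stage-processing times together with a single, tight end-to-end deadline chosen so that each job can absorb exactly one pairwise interferer but not two. With such parameters the refined DCA bound gives $\Delta_i$ growing by one additional $\widetilde{t}_{k,1}$ (and at most one additional stage-additive contribution) per job added to $\mathcal{H}_i$, so the deadline can be tuned to make $\Delta_i \leq D_i$ hold iff $|\mathcal{H}_i| \leq 1$. Under any total priority ordering, the lowest-priority job has $|\mathcal{H}_i|=2$ and hence misses its deadline; enumerating the six total orders of three jobs shows each one makes at least one job infeasible. By contrast, the cyclic pairwise assignment $J_1 > J_2$, $J_2 > J_3$, $J_3 > J_1$ is admissible under P2---which nowhere requires the pairwise relation to be transitive---and leaves each $\mathcal{H}_i$ a singleton; plugging into Eq.~\ref{eq:dca_orig_pr_multiresource_reduced_pessimism} then yields $\Delta_i \leq D_i$ for all three jobs.

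The main obstacle is this numerical balance: the processing times and deadline must simultaneously be tight enough that every total ordering (in which some job accumulates two higher-priority interferers) misses a deadline, and loose enough that each job under the cyclic pairwise assignment (with one interferer) meets it. Fortunately, the rotational geometry isolates the contribution of each interferer, so the balance reduces to a single scalar inequality and is straightforward to verify by direct substitution into the DCA formula. The same construction extends to the non-preemptive setting via Eq.~\ref{eq:dca_orig_npr_extended}, whose additional term depends only on the fixed set $\mathcal{J} \setminus J_i$ and is therefore insensitive to whether the assignment is total or pairwise.
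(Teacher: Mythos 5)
Your proposal is correct and takes essentially the same approach as the paper: both prove the observation by exhibiting a concrete MSMR instance whose feasible pairwise assignment contains a priority cycle, so that every job keeps a smaller higher-priority set than any total ordering can give to its lowest-priority job. The paper instantiates this with the four-job system of Example~1 (deadlines $\{60,55,55,50\}$ and the mapping of Figure~\ref{fig:pairwise_prior2}), whereas your symmetric three-job rotational construction with the cyclic assignment $J_1 > J_2$, $J_2 > J_3$, $J_3 > J_1$ is a cleaner, minimal variant of the same idea and verifies correctly against Eq.~\ref{eq:dca_orig_pr_multiresource_reduced_pessimism}.
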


Consider the distributed system used in {\fontfamily{lmtt}\selectfont Example~1} with deadlines $\{D_i\} = \{60, 55, 55, 50\}$. Furthermore, assume preemptive scheduling and the same arrival time for all jobs. Then, there does not exist a total priority ordering for the job-to-resource mapping shown in Figure \ref{fig:pairwise_prior2}(a); however, a pairwise priority assignment, such as shown in Figure \ref{fig:pairwise_prior2}(b), exists.
%(Section \ref{sec:sdca}) 

\begin{figure}[t]
\centering
\begin{minipage}{.45\linewidth}
    \centering
  \includegraphics[width=0.45\linewidth]{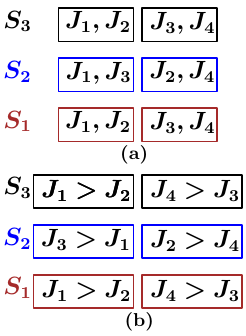}
  \caption{(a). Job-to-resource mapping. (b) Pairwise priority assignment.}
    \label{fig:pairwise_prior2}
\end{minipage}
\hspace{.05\linewidth}
\begin{minipage}{.45\linewidth}
  \includegraphics[width=0.85\linewidth]{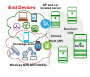}
  \caption{Edge Computing System}
    \label{fig:arch}
\end{minipage}
\end{figure}

The pairwise priority assignment is more flexible; for instance, in Figure \ref{fig:pairwise_prior2}(b), $J_3$ is a lower priority (compared to $J_4$) job at $S_1$, but a higher priority (compared to $J_2$) job at $S_2$. Such flexibility is not viable with priority ordering as, for instance, one job is the lowest priority job at all stages.      
Conclusively, with Observation \ref{obs:pairwise}, it is established that the optimality of OPDCA ceases to exist with respect to problem P2. On the other hand, if a feasible total priority order exists, a pairwise priority assignment is also ensured.

As $\mathcal{S}^{DCA}$ only requires the set of higher (and lower) priority jobs, it can also be used for computing a pairwise priority assignment. 
%Furthermore, since OPA-compatibility is not essential, Eq. \ref{eq:dca_orig_npr_multiresource} can be used while computing a pairwise priority assignment. 
Furthermore, Eq. \ref{eq:dca_orig_npr_multiresource} can be used to compute a pairwise priority assignment as OPA is not being used (hence, OPA-compatibility is not essential) to assign priority. 

\subsection{ILP Formulation to Compute a Pairwise Priority Assignment}\label{sec:pairwise_ilp} 

In the following, we present an ILP formulation, hereafter referred to as OPT, to compute pairwise priority assignment for preemptive scheduling (the same can also be extended for non-preemptive scheduling).
For every job $J_k \in \mathcal{M}_i$, let $X_{i,k}$ be a binary variable such that $X_{i,k} = 1$ if $J_i > J_k$, otherwise $X_{i,k} = 0$. Thus, 
\begin{equation}\label{eq:complementary_bv}
    X_{i,k} + X_{k,i} = 1
\end{equation}
Moreover, the end-to-end delay of $J_i$ under preemptive scheduling (Eq.~\ref{eq:dca_orig_pr_multiresource_reduced_pessimism}) is 
\begin{equation}\label{eq:delay_eqn_opt}
    \Delta_i = t_{i,1} + \sum\limits_{J_k \in \mathcal{M}_i} X_{k,i} \sum\limits_{x = 1}^{w_{i,k}} \widetilde{t}_{k,x} + \sum\limits_{j = 1}^{N - 1} \theta_{i,j}
\end{equation}
where $\theta_{i,j} = \max_{J_k \in \mathcal{Q}_i}\{\widetilde{p}_{k,{j}}\}$ and computed as described in the following. Let $P = \max_{i,j}\{ P_{i,j} \}$ and $\mathcal{Z}_{i,j}~=~\mathcal{M}_{i,j}~\cup~J_i$. Define $|\mathcal{Z}_{i,j}|$ $\left(|.| \text{ denotes the cardinality of a set}\right)$ auxiliary binary variables $\{ b_y|y \in \left[1, |\mathcal{Z}_{i,j}|\right] \}$. Let $z_{i,j,y} \in \mathcal{Z}_{i,j}$ denote the $y^{th}$ job in $\mathcal{Z}_{i,j}$. 
\begin{subequations}\label{eq:max_computation}
     \begin{align}
          \theta_{i,j} & \geq X_{k,i} \widetilde{p}_{k,{j}}, \forall J_k = z_{i,j,y} | z_{i,j,y} \in \mathcal{Z}_{i,j} \label{eq:max_subeq1}\\
          \theta_{i,j}  & \leq X_{k,i} \widetilde{p}_{k,{j}} + (1 - b_y) P, \forall J_k = z_{i,j,y} | z_{i,j,y} \in \mathcal{Z}_{i,j} \label{eq:max_subeq2}\\
          \sum_y & b_y   = 1 \label{eq:max_subeq3}
     \end{align}
\end{subequations}
Then, the problem of pairwise priority assignment is to determine $X_{i,k}$ subject to constraints (i) specified in Eqs. \ref{eq:complementary_bv} and \ref{eq:max_computation}, and (ii) timing constraints ($\Delta_i \leq D_i | \forall  J_i \in \mathcal{J}$) are satisfied.

\subsection{Deadline-Monotonic \& Repair (DMR) based Heuristic}\label{sec:pairwise_heur}

DMR computes a feasible pairwise priority assignment. DMR begins with a Deadline-Monotonic (DM) based pairwise priority assignment\footnote{DM is not optimal even in a multi-stage single-resource system for preemptive jobs with the same arrival time. The same can be verified by fixing $D_1 = 60$ in Example 1. $J_1$ gets the lowest priority, and thus, $\Delta_1 = 82$.} followed by a \emph{repair} phase in case of timing constraint violation. In the repair phase, for a job pair $\langle J_i, J_k \rangle| J_k \in \mathcal{M}_i, \Delta_i > D_i, \Delta_k < D_k, J_k > J_i $ , the priority is reversed to $J_i > J_k$ provided it is feasible for $J_k$. Pseudo-code for DMR is listed in Algorithm \ref{algo:dmr}. %\aea{Lines 1 and 2 of Algorithm 2 have a problem when $D_i = D_k$? I suppose $J_k \in \mathcal{M}_i$ and $J_i \in \mathcal{M}_k$? There is also something wrong in lines 7-9. Once you swap with a job, say $J_k$, now $J_i$'s priority can be higher than other jobs in $\mathcal{F}_i$. The for loop then does what? Again, reduce the priority of $J_i$? Maybe you should describe the repair phase in 1 sentence in the text as well to be clear.}

\begin{algorithm}[t]
\caption{DMR}\label{algo:dmr}
    \For{each job pair $\langle J_i, J_k \rangle| i \in [1, n-1], k \in [i+1, n],$ $ J_k \in \mathcal{M}_i$}{
        Set $J_i > J_k$ if $D_i \leq D_k$, $J_k > J_i$ otherwise \;
    }
    Compute $\Delta_i, \forall J_i \in \mathcal{J}$ \;
    \For{each job $J_i | \Delta_i > D_i$ \label{step:dmr_outer_for}}{
        Compute $\mathcal{F}_i \subseteq \mathcal{M}_i$ such that $J_k \in \mathcal{F}_i$ provided $J_k > J_i \ \& \ \Delta_k < D_k $ \;
        Sort jobs of $\mathcal{F}_i$ in the decreasing order of $D_k - \Delta_k$\; 
        \For{each job $J_k \in \mathcal{F}_i$ \label{step:dmr_inner_for}}{
            Set $J_i > J_k$ if feasible for $J_k$\;
            Recompute $\Delta_i$. If $\Delta_i \leq D_i$ go to Step \ref{step:dmr_outer_for} \;
        }
        If $\Delta_i > D_i$, declare $\mathcal{J}$ {\fontfamily{lmtt}\selectfont infeasible} \label{stp:dmr_infeasible} \; 
    }
    
\end{algorithm}

%%%%%%%%%%%%%%%%%%%%%%%%%%%%%%%%%%%%%%%%%%%%%% Section begins

\section{Experimental Evaluation}\label{sec:simulation}

The performance of the proposed approaches\footnote{Gurobi has been used to solve OPT. All source codes are available at https://github.com/CPS-research-group/CPS-NTU-Public/tree/DATE2024} is evaluated for holistic scheduling real-time jobs in edge computing systems, a typical example of MSMR systems. %Gurobi has been used to solve the ILP formulation proposed in Section \ref{sec:pairwise_ilp}.

\subsection{Simulation Setup} 

\textit{System Architecture:}
We consider an edge computing system as shown in Figure \ref{fig:arch}. Deadline-constrained jobs generated by \emph{End Devices} (EDs) are offloaded to an \emph{access point} (AP) through heterogeneous wireless networks using protocols such as 5G-URLLC (Ultra-Reliable Low Latency Communication). APs, in turn, forward the workloads to one of the edge/cloud servers through a wired backhaul network. Subsequently, the computed result is returned to the corresponding ED via an AP. Considering a wired high-speed and contention-free backhaul, we assume that the deadline has already been adjusted to incorporate the backhaul delay.
%model
Thus, the number of stages is $N = 3$ (wireless network for offloading, compute resource at server, and wireless network for downloading). For a given set of jobs and corresponding job-to-resource mapping\footnote{Due to the highly complex nature of holistic resource allocation and scheduling problems, the two problems are commonly addressed separately. Many works, such as \cite{gao2022deadline, chu2022online}, focus on the former. Surveys including \cite{ramanathan2020survey} may be referred for related works.}, the problem is to assign fixed priority to jobs for processing at servers and APs so that timing constraints are satisfied. Furthermore, considering a generic model, preemption is allowed at a server but prohibited during offloading (and downloading) at APs.

%Although offloading is non-preemptive, a lower-priority job begins offloading only after the higher-priority jobs. This is because i
In the edge/cloud architecture, the schedule is computed at periodic intervals, and all previously arrived and unscheduled jobs are simultaneously considered. Thus, a lower-priority job begins offloading only after the higher-priority jobs; hence, $\mathcal{H}_i^{a} = \varnothing$. However, a job can be delayed by a lower-priority job during downloading. Thus, Eq.~\ref{eq:dca_orig_pr_multiresource_reduced_pessimism} can be extended by including the delay due to non-preemptive downloading as follows.
\begin{equation} \label{eq:dca_orig_pr_multiresource_reduced_pessimism_edge}
    \Delta_i \leq \sum\limits_{ J_k \in \mathcal{Q}_i} \sum\limits_{x = 1}^{w_{i,k}} \widetilde{t}_{k,x} + \max_{ J_k \in \mathcal{Q}_i}{\widetilde{p}_{k,1}} + \max_{ J_k \in \mathcal{Q}_i}{\widetilde{p}_{k,2}} + \max_{ J_k \in \mathcal{L}_i}{\widetilde{p}_{k,3}}  
\end{equation}
Being a 3-stage system with non-preemption only at the last stage, Eq. \ref{eq:dca_orig_pr_multiresource_reduced_pessimism_edge} is OPA-compatible even if the 4$^{th}$ term (delay component due to non-preemption) is computed over $\mathcal{L}_i$. 

\textit{System Parameters:} The system parameters are selected in line with existing works~\cite{gao2022deadline, chu2022online}. The number of APs (for offloading and downloading), servers, and jobs are fixed to $25$, $20$, and $100$, respectively. Moreover, the uplink and downlink speeds of APs, computation speed of servers, and job parameters are set such that offloading, processing, and downloading times (in milliseconds) of a job are in the ranges $[2,200]$, $[50,500]$, and $[2,100]$, respectively.
%(i.e., input data size, number of computation cycles, and output data size)

%\textit{set of jobs Generation:} 
We define heaviness of a job $J_i$ at stage $S_{j}$ as $h_{i,j} = \frac{P_{i,j}}{D_i}$. Let $\chi_{y,j}$ be the sum of the heaviness of all jobs mapped to the $y^{th}$ resource at $S_{j}$. Then, we define the heaviness of the set of jobs as $H = \max_{y,j}\{\chi_{y,j}\}$. Note that $H$, with the number of stages, resembles total utilization. Then, the performance of the proposed approaches is reported for varying: (i) \emph{heaviness threshold} ($\beta$): $J_i$ is referred to as \emph{heavy} at $S_{j}$ provided $h_{i,j} \geq \beta$, (ii) \emph{per-stage heaviness requirement} $\left(  [ h_1, h_2, h_3 ] \right)$: $h_{j}$ denotes the ratio of jobs that are heavy at $S_{j}$, and (iii) \emph{heaviness bound} ($\gamma$) such that  $H \leq \gamma$. The maximum heaviness of a job at any stage is always bounded by $2 \beta$. Furthermore, the default values are $\beta = 0.15$, $h_1 = h_2 = 0.05$, $h_3 = 0.01$, and $\gamma = 0.7$. %The default values indicate that (i) 5\% of the jobs are heavy at $S_1$ and $S_2$ each, whereas 1\% at $S_3$, (ii) the processing time of a heavy job $J_i$ in $S_{j}$ is at least 5\% of the deadline $D_i$, and (iii) sum of heaviness of all jobs mapped to a resource is bounded by $\gamma = 0.7$.  

 %Test cases are generated by simulating the processing of jobs in an edge computing system specified with the aforementioned parameters.

%\textit{Test Case Generation:} A test case is generated by simulating the processing of jobs in an edge computing system specified with the aforementioned parameters. Considering a realistic setting, we assume that a job may be processed on any server and, thus, mapped to a server selected randomly; however, due to coverage area restriction, the offloading takes place at one of the three APs that are in the range (likewise for downloading). Moreover, to avoid an extremely skewed mapping, the total number of jobs mapped to an AP is allowed to vary by 30\% of the average number of jobs per AP. Similarly, jobs are mapped to a server for processing and APs for downloading. For a randomly assigned priority, the processing of jobs is simulated, and the observed finish time of jobs is considered to be the corresponding deadline. However, if required, the deadline is adjusted to satisfy the heaviness requirements while maintaining feasibility by decreasing execution time to adjust per-stage heaviness and increasing the deadline to obtain the desired heaviness bound for $\mathcal{J}$. 

%\aea{This is not reproducible. You need to provide details on how a jobset is generated.}. We generate $10,000$ feasible test cases for each configuration of the simulation parameters to evaluate the approaches. 

%\textit{Baselines:} 
%The proposed approaches (viz. OPDCA, OPT, and DMR) are evaluated with respect to two baselines. 
To investigate the usefulness of the repair phase in DMR, we use a baseline, referred to as DM, which is like DMR but without any repair phase. A decomposition-based approach, referred to as DCMP, is used as another baseline, which involves decomposing the end-to-end deadline of each job to a virtual deadline corresponding to each stage. 
Let $\Upsilon_{i,j}$ denotes the sum of heaviness of all jobs mapped to $R_{i,j}$ (the resource to which $J_i$ is mapped in stage $S_{j}$); then, the virtual deadline of $J_i$ at $S_{j}$ is $D_i \times \frac{\Upsilon_{i,j}}{\sum_{j} \Upsilon_{i,j}}$.
%The virtual deadline of $J_i$ at $S_{j}$ is obtained by multiplying $D_i$ with the ratio of the heaviness of the resource to which $J_i$ is mapped at $S_{j}$ and the sum of the heaviness of all resources to which $J_i$ is mapped in the pipeline\aea{Can you write this in expression form?}. 
%
As a schedulability test that can be applied even to the decomposed jobs in this setting does not exist, we simulate the processing of decomposed jobs by assigning priorities in the inverse order of the deadline.

\begin{figure*}
    \centering
    \begin{subfigure}{0.23\textwidth}
        \centering
        \includegraphics[width=0.9\linewidth]{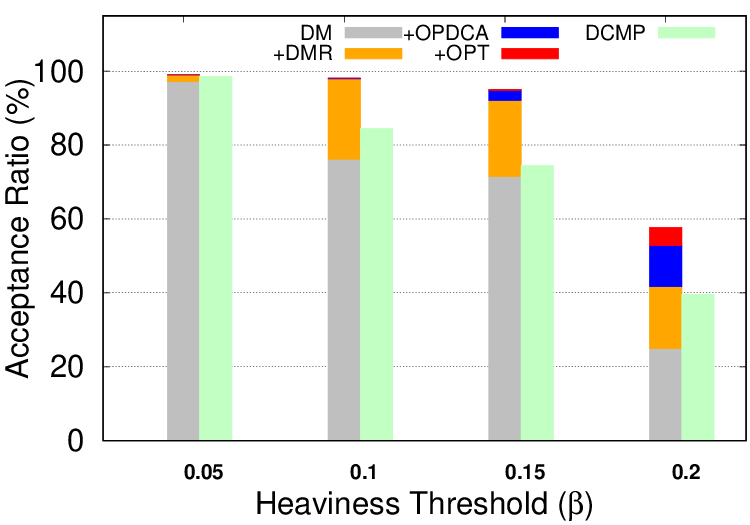}
        \caption{Varying $\beta$}
        \label{fig:ratio_beta}
    \end{subfigure}%
    ~ 
    \begin{subfigure}{0.23\textwidth}
        \centering
        \includegraphics[width=0.9\linewidth]{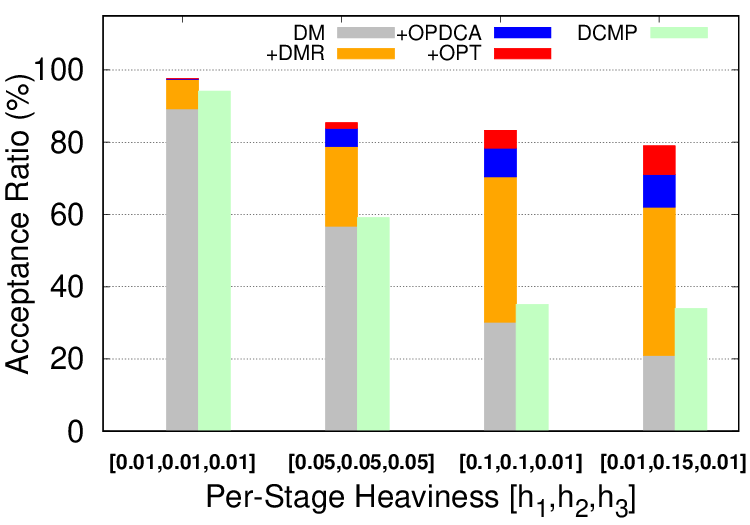}
        \caption{Varying $[h_1,h_2,h_3]$}
        \label{fig:ratio_perstage}
    \end{subfigure}%
    ~ 
    \begin{subfigure}{0.23\textwidth}
        \centering
        \includegraphics[width=0.9\linewidth]{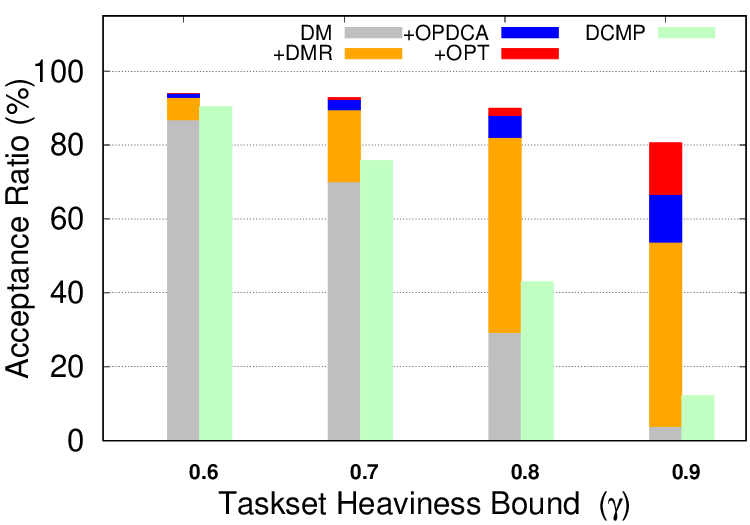}
        \caption{Varying $\gamma$}
        \label{fig:ratio_gamma}
    \end{subfigure}
    ~
    \begin{subfigure}{0.23\textwidth}
        \centering
        \includegraphics[width=0.9\linewidth]{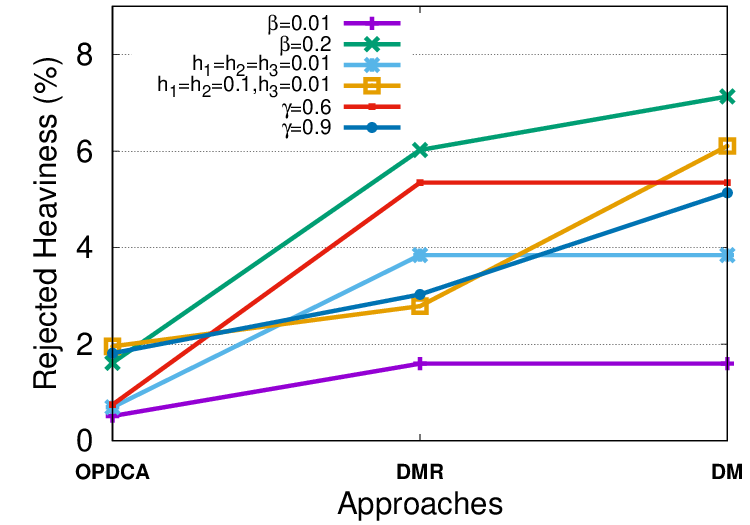}
        \caption{Rejected Heaviness}
        \label{fig:partial}
    \end{subfigure}
    \caption{Acceptance ratios (ARs) for varying (a) $\beta$, (b) $[h_1, h_2, h_3]$, and (c) $\gamma$. The base of the stacked histogram represents the AR of DM. The increment in the AR of DMR, OPDCA, and OPT compared to the AR of DM, DMR, and OPDCA, respectively, are stacked (upward) in this sequence. (d) Performance as admission controller.}
    \label{fig:sim}
\end{figure*}

\subsection{Results and Discussion} 

We define \emph{acceptance ratio} (AR) of an approach as the percentage of accepted test cases. Moreover, AR$_{DM}$, for instance, denotes the AR of DM.
%talk about P1
As shown in Figure \ref{fig:sim}, for a lightly loaded system such as $\beta = 0.05$ (i.e., $J_i$ with $P_{i,j} \geq 0.05\times D_i$ is heavy at $S_{j}$), AR$_{DM}$ is substantially high ($>97\%$). However, as $\beta$ increases, the processing time of all jobs, heavy as well as those which are not heavy, increases; the same can be observed with the rapidly deteriorating AR$_{DM}$. However, AR$_{OPDCA}$ (feasibility of computing a priority ordering) remains relatively stable except for heavily loaded systems. For instance, for $\beta = 0.15$, a priority ordering can be computed (by OPDCA) for $94\%$ of the test cases, whereas $71\%$ and $74\%$ test cases are schedulable by DM and DCMP, respectively. AR$_{OPDCA}$ is only marginally higher than AR$_{DM}$ or AR$_{DCMP}$ for lightly loaded system such as $\beta = 0.05$ (or  $\gamma = 0.6$ or $h_j = 0.01$). However, a substantial gap is observed with increasing load.

%talk about P2
Likewise, for a lightly loaded system, AR for pairwise priority assignment using DMR is comparable to that of the DM. However, AR$_{DM}$ decreases rapidly with increasing load, while a relatively stable AR$_{DMR}$ indicates the usefulness of the repair phase. We observe a higher AR for computing a priority ordering (by OPDCA, an optimal algorithm) compared to computing a pairwise priority assignment (by DMR, a heuristic). However, AR$_{OPT}$ (to compute an optimal pairwise priority assignment) is higher than AR$_{OPDCA}$, supporting Observation \ref{obs:pairwise}. For a moderate heaviness, such as $\gamma = 0.8$, a pairwise priority assignment can be computed for $90\%$ of the test cases by OPT compared to the priority ordering for $88\%$ by OPDCA. However, the gap in the performance of OPDCA and OPT becomes more evident with higher load. Moreover, the simulation results also reveal the rapidly degrading performance of DCMP with increasing load.%, for instance, for per-stage heaviness $0.1,0.1,0.01$ AR$_{DCMP}$ is $<40\%$.

%fig a
%As shown in Figure \ref{fig:ratio_beta}, $\beta = 0.05$ indicates a lightly loaded system, specifically, $h_1=h_2=5\%$ of the jobs are heavy at $S_1$ (i.e., $\frac{P_{j,1}}{D_i} \geq 0.05$) and $S_2$ each, whereas $h_3=1\%$ at $S_3$ and $\gamma=0.7$. However, as $\beta$ increases, the processing time of heavy jobs and those not heavy also increases. Thus, a significant decrease in the acceptance ratio is observed for DM and DCMP. Furthermore, as expected, the gap between OPT and OPDCA also increases with $\beta$. 
%fig b
%Figure \ref{fig:ratio_perstage} illustrates the acceptance ratio with different per-stage heaviness requirements. The acceptance ratio of DMR and DCMP decreases rapidly with increasing per-stage heaviness requirement; however, OPDCA performs significantly better.
%fig c
%The effect of varying jobset heaviness bound is shown in Figure \ref{fig:ratio_gamma}. We observe a relatively stable acceptance ratio for OPDCA. However, as $\gamma$ increases, the performance of DM and DCMP deteriorate quickly, with a relatively lower degradation in the performance of DMR. 

We also observe the performance of the proposed approaches by running them as admission controllers to accept as many jobs as possible. Thus, in Step \ref{stp:opdca_infeasible} of Algorithm \ref{algo:dca_and_opa}, rather than declaring the entire set of jobs infeasible, a job (with the largest $\Delta_i - D_i$) is discarded, and priority assignment is reattempted for the jobs which are yet to be assigned a priority. Likewise, in Step \ref{stp:dmr_infeasible} of Algorithm \ref{algo:dmr}, the job with the largest $\Delta_i - D_i$ is discarded. Figure \ref{fig:partial} illustrates the \emph{rejected heaviness}, defined as the percentage of the heaviness of rejected jobs with respect to the heaviness of all jobs. In general, even for the parameters corresponding to high system load, OPDCA performs reasonably well. 

Conclusively, we observe higher acceptance ratios for OPT and OPDCA compared to other approaches. However, the performance gap increases with the system load, indicating the suitability of pairwise priority assignment. Further, this gap is likely to grow with the number of stages, resources, and jobs. % in the distributed system.

%%%%%%%%%%%%%%%%%%%%%%%%%%%%%%%%%%%%%%%%%%%%%% Section begins

\section{Conclusion and Future Works} \label{sec:conclusion}

In this work, the delay composition rule proposed in~\cite{jayachandran2008delay, jayachandran2008transforming} is refined and shown to be OPA-compatible for optimal FP scheduling of real-time jobs with end-to-end delay in a multi-stage multi-resource (MSMR) distributed system. 
Furthermore, the suitability of pairwise priority assignment is also established in MSMR systems. Moreover, an ILP formulation and a deadline-monotonic repair-based heuristic have also been presented to compute a pairwise priority assignment. 
Finally, simulation results are discussed for the holistic scheduling of real-time jobs for compute and network resources in an edge-computing system using the proposed approaches. 

This work presents a starting point for future research to apply and investigate the use of DCA for scheduling, in addition to the originally intended schedulability analysis, in distributed systems. Future research directions also include further exploration of pairwise priority assignment strategies, including developing algorithms with performance guarantees.

%% The next two lines define the bibliography style to be used, %% the bibliography file.
\bibliographystyle{IEEEtran}
\bibliography{date}

\end{document}